\documentclass[british,english,hyphens]{article}
\usepackage[T1]{fontenc}
\usepackage[utf8]{inputenc}
\usepackage{babel}
\usepackage{array}
\usepackage{refstyle}
\usepackage{float}
\usepackage{booktabs}
\usepackage{units}
\usepackage{amsmath}
\usepackage{amsthm}
\usepackage{graphicx}
\usepackage{wasysym}
\usepackage[unicode=true]
 {hyperref}

\makeatletter

\AtBeginDocument{\providecommand\thmref[1]{\ref{thm:#1}}}
\AtBeginDocument{\providecommand\subsecref[1]{\ref{subsec:#1}}}
\providecommand{\tabularnewline}{\\}
\RS@ifundefined{subsecref}
  {\newref{subsec}{name = \RSsectxt}}
  {}
\RS@ifundefined{thmref}
  {\def\RSthmtxt{theorem~}\newref{thm}{name = \RSthmtxt}}
  {}
\RS@ifundefined{lemref}
  {\def\RSlemtxt{lemma~}\newref{lem}{name = \RSlemtxt}}
  {}

\numberwithin{equation}{section}
\numberwithin{figure}{section}
\theoremstyle{plain}
\newtheorem{thm}{\protect\theoremname}
  \theoremstyle{definition}
  \newtheorem{defn}[thm]{\protect\definitionname}
  \theoremstyle{remark}
  \newtheorem{rem}[thm]{\protect\remarkname}
  \theoremstyle{plain}
  \newtheorem{prop}[thm]{\protect\propositionname}
 \theoremstyle{definition}
 \newtheorem*{defn*}{\protect\definitionname}

\usepackage{hyperref}
\hypersetup{colorlinks = false, allcolors=., citebordercolor = [rgb]{0,0,0}, linkbordercolor = [rgb]{0,0,0}, urlbordercolor = [rgb]{0,0,0}}
\usepackage{pifont}
\usepackage[T1]{fontenc}
\usepackage{microtype}
\usepackage{lipsum}
\usepackage{blindtext,titlefoot}
\usepackage{ragged2e}
\usepackage{float}

\usepackage{pgfplots}
\pgfplotsset{width=7cm,compat=1.10}

\newtheorem*{assumption*}{\assumptionnumber}
\providecommand{\assumptionnumber}{}
\makeatletter

\makeatother

\makeatother

  \addto\captionsbritish{\renewcommand{\definitionname}{Definition}}
  \addto\captionsbritish{\renewcommand{\propositionname}{Proposition}}
  \addto\captionsbritish{\renewcommand{\remarkname}{Remark}}
  \addto\captionsbritish{\renewcommand{\theoremname}{Theorem}}
  \addto\captionsenglish{\renewcommand{\definitionname}{Definition}}
  \addto\captionsenglish{\renewcommand{\propositionname}{Proposition}}
  \addto\captionsenglish{\renewcommand{\remarkname}{Remark}}
  \addto\captionsenglish{\renewcommand{\theoremname}{Theorem}}
  \providecommand{\definitionname}{Definition}
  \providecommand{\propositionname}{Proposition}
  \providecommand{\remarkname}{Remark}
\providecommand{\theoremname}{Theorem}

\begin{document}

\title{{\Large{}Pravuil: Global Consensus for a United World}}

\author{David Cerezo Sánchez\textsuperscript{}\\
{\small{}david@calctopia.com}}
\maketitle
\begin{abstract}
Pravuil \footnote{\label{fn:In-the-Book}In the Book of the Secrets of Enoch\cite{bookOfSecretsOfEnoch},
an archangel ``swifter in wisdom than the other archangels'', scribe
and recordkeeper.} is a robust, secure, and scalable consensus protocol for a permissionless
blockchain suitable for deployment in an adversarial environment such
as the Internet. Pravuil circumvents previous shortcomings of other
blockchains:

- Bitcoin's limited adoption problem: as transaction demand grows,
payment confirmation times grow much lower than other PoW blockchains

- higher transaction security at a lower cost

- more decentralisation than other permissionless blockchains

- impossibility of full decentralisation and the blockchain scalability
trilemma: decentralisation, scalability, and security can be achieved
simultaneously

- Sybil-resistance for free implementing the social optimum

- Pravuil goes beyond the economic limits of Bitcoin or other PoW/PoS
blockchains, leading to a more valuable and stable crypto-currency\\

\textbf{Keywords}: consensus, permissionless, permissioned, scalability,
zero-knowledge, mutual attestation, zk-PoI\\
\end{abstract}

\section{Introduction}

A third generation of blockchains has been developed featuring the
latest advances in cryptography and sharding to reach maximum performance
and security in Internet settings: they usually make use of advances
in BFT-like consensus protocols \cite{consensusSOK,shardingSOK} and
collective signatures\cite{cryptographySOK} to obtain 1000s of transactions
per second.

In this work, we introduce Pravuil \ref{fn:In-the-Book}, a robust,
secure, and scalable consensus protocol for real-world deployments
on open, permissionless environments that, unlike other proposals,
remains robust to high adversarial power and adaptation while considering
rational participants and providing strong consistency (i.e., no forks,
forward-security, and instant transactions). Our protocol is also
the first to integrate real-world identity on layer 1 as required
by current financial regulations, obtaining Sybil-resistance for free:
a very useful property considering the electrical waste produced by
Bitcoin, its Achilles' heel that this blockchain circumvents for the
first time by obviating to pay the Price of Crypto-Anarchy \cite{zkpoi}.

To achieve the desired goals, we introduce a new consensus protocol
in which we \foreignlanguage{british}{prioritise} robustness against
attackers and censorship-resistance. We then incorporate zero-knowledge
Proof-of-Identity \cite{zkpoi} while maintaining an open, permissionless
node membership mechanism enabling high levels of decentralisation.
Finally, we will show a working system of the proposed design in an
open-sourced Testnet at \href{https://github.com/Calctopia-OpenSource}{https://github.com/Calctopia-OpenSource}.

\subsection{Contributions}

In summary, we make the following contributions:
\begin{itemize}
\item we propose a consensus protocol that remains robust, secure, and scalable
among rational participants in an Internet setting
\item we prove liveness, safety, and censorship-resistance of our new consensus
protocol
\item we discuss the underlying rationale of our design and prove all the
advantages that it provides over previous blockchain designs
\item we provide an open-source implementation running on a Testnet
\end{itemize}

\section{Related Literature}

Previous blockchain designs \cite{consensusSOK,cryptographySOK,researchcryptocurrenciesSOK,shardingSOK}
deal with the different trade-offs of the scalability trilemma (security
vs. scalability vs. decentralisation) and they don't usually concern
with the economic consequences of their design (e.g., the Price of
Crypto-Anarchy) or the legal consequences of the lack of real-world
identity as required by recent legislation (FATF's Travel Rule).

Previous designs of ByzCoin/OmniLedger/MOTOR (\cite{byzcoin,omniledger,motor})
proposed Proof-of-Work(PoW) as a Sybil-resistance mechanism: although
their consensus protocol is more advanced and performant than Bitcoin,
they would still pay for the Price of Crypto-Anarchy \cite{zkpoi}.
And although other blockchains (e.g., \cite{concordium}) provide
methods to anonymise real-world identities, they fail to incorporate
these privacy techniques on their consensus protocol as they keep
on using Proof-of-Stake as a Sybil-resistance mechanism, thus they
still pay the Price of Crypto-Anarchy \cite{zkpoi}, suffer from Bitcoin's
limited adoption problem \cite{bitcoinFatalFlaw} and exist within
the same economic limits \cite{economicLimitBitcoin}.
\begin{center}
\begin{tabular}{|c|c|c|c|}
\hline 
 & \textbf{\small{}Bitcoin} & \textbf{\small{}ByzCoin/MOTOR} & \textbf{\small{}Pravuil}\tabularnewline
\hline 
\hline 
Secure & $\checked$ & $\checked$ & $\checked$\tabularnewline
\hline 
Decentralised & $\checked$ & $\checked$ & $\checked$\tabularnewline
\hline 
Scalability & $\XBox$ & $\checked$ & $\checked$\tabularnewline
\hline 
Real-world Identity & $\XBox$ & $\XBox$ & $\checked$\tabularnewline
\hline 
Free Sybil-resistance & $\XBox$ & $\XBox$ & $\checked$\tabularnewline
\hline 
Lawfulness & $\XBox$ & $\XBox$ & $\checked$\tabularnewline
\hline 
Unlimited adoption & $\XBox$ & $\XBox$ & $\checked$\tabularnewline
\hline 
No economic limitations & $\XBox$ & $\XBox$ & $\checked$\tabularnewline
\hline 
\end{tabular}
\par\end{center}

\section{Background and Model}

\subsection{Prior Work}

Pravuil builds over ByzCoin\cite{byzcoin}, OmniLedger\cite{omniledger},
and MOTOR \cite{motor}: in the next section \ref{sec:Detailed-Design},
we extend these protocols to address issues that prevent their deployment
in an adversarial environment such as the Internet.

\subsection{Assumptions}

In this work, we assume the following model and definitions:
\begin{defn}
\textbf{(Strongly-consistent broadcast \cite{parsimonious}).} A protocol
for strong consistent broadcast satisfies the following conditions
except with negligible probability:
\end{defn}

\begin{itemize}
\item \textbf{Termination}: If a correct party \textit{strongly-consistent
broadcasts} $m$ with tag $ID$, then all correct parties eventually
\textit{strongly-consistent deliver} $m$ with tag $ID$.
\item \textbf{Agreement}: If two correct parties $P_{i}$ and $P_{j}$ \textit{strongly-consistent
deliver} $m$ and $m'$ with tag $ID$, respectively, then $m=m'$.
\item \textbf{Integrity}: Every correct party \textit{strongly-consistent
delivers} at most one payload $m$ with tag $ID$. Moreover, if the
sender $P_{s}$is correct, then $m$ was previously \textit{strongly-consistent
broadcast} by $P_{s}$ with tag $ID$.
\item \textbf{Transferability}: After a correct party has \textit{strongly-consistent
delivered} $m$ with tag $ID$, it can generate a string $M_{ID}$such
that any correct party that has not \textit{strongly-consistent delivered}
message with tag $ID$ is able to \textit{strongly-consistent deliver}
some message immediately upon processing $M_{ID}$.
\item \textbf{Strong unforgeability}: For any $ID$, it is computationally
infeasible to generate a value $M$ that is accepted as valid by the
validation algorithm for completing $ID$ unless $n-2t$ correct parties
have initialised instance $ID$ and actively participated in the protocol.
\end{itemize}
\begin{defn}
\textbf{(Partial synchronous model \cite{partialSynchrony,partialSynchronyConsensus}).}
In a partially synchronous network, there is a known bound $\Delta$
and an unknown Global Stabilisation Time (GST), such that after GST,
all transmissions between honest nodes arrive within time $\Delta$.
\end{defn}

\begin{defn}
\textbf{(n=3f+1 \cite{impossibilityByzantineAgreeement}).} The proportion
of malicious nodes that an adversary controls accounts for no more
than $\nicefrac{1}{3}$ of the whole shard. The rest of the nodes
are rational, that is, maximisers of their transaction rewards.
\end{defn}

\begin{defn}
\textbf{(Round-adaptive adversary \cite{hybridConsensus}). }We assume
a mildly-adaptive, computationally bounded adversary that chooses
which nodes to corrupt at the end of every consensus round and has
control over them at the end of the next round.
\end{defn}

\begin{defn}
\textbf{(Strong Consistency \cite{byzcoin}).} The generation of each
block is deterministic and instant, with the following features:
\end{defn}

\begin{itemize}
\item There is no fork in a blockchain. By running a distributed consensus
algorithm, state machine replication is achieved.
\item Transactions are confirmed almost instantly. Whenever a transaction
is written into a block, the transaction is regarded as valid.
\item Transactions are tamper-proof (\textit{forward security}). Whenever
a transaction is written to a blockchain, the transaction and block
cannot be tampered with and the block will remain on the chain at
all times.
\end{itemize}
\begin{defn}
\textbf{(BLS \cite{blssignatures} and BDN \cite{bdnmultisignatures}
signatures). }Boneh-Lynn-Sacham and Boneh-Drijvers-Neven signatures
are assumed secure.
\end{defn}

\begin{defn}
\textbf{(Global PKI \cite{icaoPKI}).} Our blockchain design assumes
a global PKI, not directly for consensus purposes, but as a node-admission
and Sybil-resistance mechanism\cite{zkpoi}.
\end{defn}

\begin{defn}
\textbf{(Permissionless network \cite{whatispermisionless}).} In
a permissionless network:
\end{defn}

\begin{itemize}
\item Anyone can join a node without requiring permission from any party.
\item Any node can join or leave at any time.
\item The number of participating nodes varies at any time and is unpredictable.
\end{itemize}

\section{\label{sec:Detailed-Design}Detailed Design}

Pravuil builds over ByzCoin\cite{byzcoin}, OmniLedger\cite{omniledger},
and MOTOR \cite{motor}: ByzCoin\cite{byzcoin} envisions a Bitcoin\cite{bitcoin}
protocol that uses strongly consistent consensus, scaling with multi-cast
trees and aggregate Schnorr signatures. OmniLedger\cite{omniledger}
adds sharding over ByzCoin\cite{byzcoin}, and MOTOR\cite{motor}
strengthens the robustness of ByzCoin\cite{byzcoin} for an open,
adversarial network such as the Internet.

Pravuil improves over previous works by using another source of randomness,
drand\cite{drand}, and by incorporating zero-knowledge Proof-of-Identity\cite{zkpoi}
as a Sybil-resistance mechanism into the first layer of the consensus
protocol.

\subsection{Goals}

To sum up, Pravuil has the following goals:
\begin{itemize}
\item \textbf{Robustness}: the consensus round can only be disrupted by
controlling the leader node.
\item \textbf{Scalability}: the protocol performs well among hundreds of
nodes ($n=600$).
\item \textbf{Fairness}: the malicious leader can only be elected with a
probability equal to the percentage of malicious nodes in the system
(i.e., the adversary cannot always control the leader).
\end{itemize}
We detail the extensions over a previous BFT protocol such as ByzCoin/MOTOR
in order to obtain an improved blockchain-consensus algorithm.

\subsection{Rotating Leader\label{subsec:Rotating-Leader}}

View-change protocols assume a predetermined schedule of leaders,
making them susceptible to adversaries that compromise the next $f$
leaders.

To prevent this attack, our blockchain uses drand\cite{drand}: an
efficient randomness beacon daemon that \foreignlanguage{british}{utilises}
\foreignlanguage{british}{bilinear} pairing-based cryptography, t-of-n
distributed key generation, and threshold BLS\cite{blssignatures}
signatures to generate publicly-verifiable, unbiasable, unpredictable,
highly-available, distributed randomness at fixed time intervals.
As described in its online specification\cite{drandSpecification},
drand uses the BLS12-381 curve, the Feldman\cite{feldmanVSS} Verifiable
Secret Sharing protocol and the Joint Feldman protocol\cite{dkgDiscreteLog}
for DKG generation; using threshold BLS signatures as a source of
randomness is proven secure\cite{secureRandomBeacons} according to
its security model\cite{drandSecurityModel}.
\begin{rem}
In this work, we inherit all the previous security theorems from ByzCoin\cite{byzcoin},
OmniLedger\cite{omniledger}, and MOTOR \cite{motor}.
\end{rem}

\begin{thm}
(Robustness / Liveness). \label{thm:adversary-cannot-predict-leader-election}The
adversary cannot predict nor bias the leader election.
\end{thm}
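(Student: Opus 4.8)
The plan is to reduce the unpredictability and unbiasability of the leader election to the corresponding security properties of the underlying randomness beacon, drand, which were already cited as proven secure in the excerpt. First I would make the reduction explicit: the leader for each consensus round is selected by a deterministic, public function of the drand beacon output for that round (for instance, by reducing the beacon value modulo the number of eligible nodes, or by using it to seed a public permutation). Since this selection map is fixed and known to all parties in advance, any ability of the adversary to predict or bias the identity of the elected leader must translate into a corresponding ability to predict or bias the beacon output itself. I would state this as a lemma: if there is an adversary that predicts (resp.\ biases) the leader with non-negligible advantage over the baseline probability, then one can construct an adversary against the beacon's unpredictability (resp.\ bias-resistance) with related advantage.

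The key steps, in order, are as follows. First, I would fix notation for one consensus round: let $r_i$ denote the drand beacon value for round $i$, and let $L_i = \Phi(r_i)$ be the elected leader, where $\Phi$ is the public selection function. Second, I would invoke the established security of threshold BLS as a randomness beacon, namely that $r_i$ is unpredictable (no party learns it before it is published) and unbiasable (no coalition below the threshold can influence its distribution away from uniform), citing the security model referenced in the excerpt. Third, for unpredictability, I would argue that because $\Phi$ is evaluated only after $r_i$ is revealed, and because under the $n = 3f+1$ assumption the adversary controls strictly fewer than the $t$-of-$n$ threshold of beacon shares, the adversary cannot reconstruct $r_i$ ahead of time and therefore cannot compute $L_i$ in advance beyond guessing. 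Fourth, for unbiasability, I would show that since $r_i$ is uniform and independent of adversarial choices, $L_i = \Phi(r_i)$ inherits a distribution under which each eligible node is selected with probability matching its legitimate share, which is exactly the fairness property stated in the Goals (a malicious leader is elected only with probability equal to the malicious fraction). I would then conclude that disrupting the round requires controlling the already-elected leader, consistent with the stated robustness goal.

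The main obstacle I anticipate is twofold. The first and more delicate point is establishing that the adversarial model here genuinely matches the one under which drand is proven secure: the excerpt assumes a mildly-adaptive, round-adaptive adversary who corrupts nodes at the end of a round and controls them in the next, whereas beacon unbiasability proofs typically assume a static or differently-timed corruption schedule. I would need to argue carefully that because the beacon value $r_i$ is committed and revealed within a single round, the round-adaptive adversary has no opportunity to adaptively corrupt beacon participants in a way that affects $r_i$, so the static-style guarantee suffices. The second, more routine, obstacle is ensuring the selection function $\Phi$ does not itself introduce exploitable bias: a naive reduction modulo $n$ skews probabilities when $n$ does not divide the beacon range, so I would either assume $\Phi$ uses rejection sampling or a sufficiently large beacon space that the skew is negligible, and fold that negligible term into the ``except with negligible probability'' qualifier already present throughout the model.
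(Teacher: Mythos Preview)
Your proposal is correct and follows essentially the same approach as the paper: both reduce unpredictability and unbiasability of the leader to the corresponding security properties of the drand beacon, which in turn rest on threshold BLS (the paper phrases it as unpredictability from BLS unforgeability and unbiasability from the deterministic nature of BLS signing). Your version is considerably more careful---introducing the selection map $\Phi$, checking the threshold against the $n=3f+1$ assumption, and flagging the adversarial-model alignment and modular-bias issues---whereas the paper's proof is a terse three-sentence appeal to the same underlying facts.
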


\begin{proof}
The unpredictability property follows from the unforgeability of the
BLS\cite{blssignatures} signing algorithm, and the unbiasability
property follows from the deterministic nature of the BLS\cite{blssignatures}
signing algorithm. The leader of view $v$ is determined by the outcomes
of drand's public service, and all the nodes can publicly-verify its
election when needed. Thus, the adversary cannot predict nor bias
the leader election, preventing the adversary from breaking liveness.
\end{proof}
\begin{thm}
(Safety / Censorship-resistance). \label{thm:(Safety-/-Censorship-resistance)}
A round-adaptive adversary cannot always control the consensus decision.
\end{thm}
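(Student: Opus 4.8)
The plan is to reduce the statement ``the adversary cannot always control the consensus decision'' to ``the adversary cannot guarantee a malicious leader in every view,'' and then to derive the latter from the unpredictability and unbiasability already established. First I would fix the meaning of \emph{controlling a decision}: in the leader-based BFT protocol inherited from ByzCoin/MOTOR, dictating or censoring the decided block requires controlling the leader of that view. This is precisely the \textbf{Robustness} goal, which asserts that a consensus round can only be disrupted by controlling the leader node. Hence it suffices to exhibit, with overwhelming probability, at least one view whose leader is honest and whose decision is therefore not under adversarial control.

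Next I would invoke Theorem~\ref{thm:adversary-cannot-predict-leader-election}: since the drand beacon output is unpredictable and unbiasable, the leader of view $v$ is a node that the adversary can neither forecast nor skew. Combined with the $n=3f+1$ bound and the \textbf{Fairness} goal, the probability that any fixed view has a malicious leader is at most the corrupted fraction, hence at most $\nicefrac{1}{3}<1$. Consequently an honest leader is elected in a given view with probability at least $\nicefrac{2}{3}$, and the probability that $k$ consecutive views are \emph{all} led by malicious nodes is at most $(\nicefrac{1}{3})^{k}$, which is negligible in $k$. This already shows the adversary cannot control the decision in all views, modulo the adaptivity concern addressed below.

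Then I would close the gap created by adaptive corruption: the naive worry is that the adversary, upon learning that an honest $P$ will lead view $v$, simply corrupts $P$. Here I would use the \textbf{round-adaptive} model, in which the adversary selects corruptions only at the end of a round and gains control only by the end of the \emph{next} round, whereas the beacon reveals the view-$v$ leader no earlier than the start of view $v$. Under partial synchrony, after GST the honest leader completes its strongly-consistent broadcast within the round, so by the \textbf{Agreement} and \textbf{Integrity} properties the proposal is delivered and fixed before the one-round corruption delay elapses. Thus the freshly-elected honest leader cannot be turned malicious in time to alter that view's decision. Before GST, I would argue separately that safety still holds unconditionally, because the \textbf{Strong unforgeability} property forbids producing a valid completion certificate without the $n-2t$ honest parties participating, so a malicious coalition still cannot force a decision of its choosing.

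The hard part will be the timing argument of the third paragraph: one must show rigorously that the interval between the beacon revealing the leader and the honest leader finalising its block is strictly shorter than the one-round delay the mildly-adaptive adversary needs to assume control, and that this holds once GST is reached while no conflicting decision is produced in the pre-GST window. I would handle this by bounding the post-GST round length by $O(\Delta)$ and comparing it against the corruption latency, and by appealing to the honest $n-2t$ quorum to rule out equivocation throughout. Assembling these pieces yields that, with overwhelming probability, there exists a view whose decision the adversary does not control, which is exactly the claim that a round-adaptive adversary cannot always control the consensus decision.
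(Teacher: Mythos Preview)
Your proposal is correct and follows essentially the same route as the paper: invoke Theorem~\ref{thm:adversary-cannot-predict-leader-election} for unpredictability, bound the per-view probability of a malicious leader by $\nicefrac{1}{3}$, and conclude via $(\nicefrac{1}{3})^{k}\to 0$ that the adversary cannot control every view. Your reduction from ``controlling the decision'' to ``controlling the leader,'' your timing argument for the round-adaptive corruption delay, and your pre-GST safety appeal to strong unforgeability all go beyond what the paper's own proof spells out; the paper simply records the $\nicefrac{1}{3^{d}}$ bound and the limit, so your version is strictly more detailed rather than different in approach.
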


\begin{proof}
As the leader election is unpredictable (\thmref{adversary-cannot-predict-leader-election}),
the adversary can only hope that one of its randomly compromised nodes
gets chosen. Given that 
\[
\frac{1}{3^{d}}
\]
is the probability that the adversary controls $d$ consecutive leaders,
the adversary cannot control the leader forever since
\[
\underset{d\rightarrow\infty}{\lim}\frac{1}{3^{d}}=0
\]
thus the adversary always controls the consensus decision.
\end{proof}

\subsection{Zero-Knowledge Proof-of-Identity\label{subsec:Zero-Knowledge-Proof-of-Identity}}

In a previous work, we introduced zero-knowledge Proof-of-Identity\cite{zkpoi}
for biometric passports \cite{icaoPKI} and electronic identity cards
to permissionless blockchains in order to remove the inefficiencies
of Sybil-resistant mechanisms such as Proof-of-Work \cite{bitcoin}
and Proof-of-Stake \cite{ppcoin}. Additionally, attacks \cite{feasibilitySybilAttacks,abdelatif2021tractable}
on PoW sharded permissionless blockchains are prevented with zk-PoI:
an identity will be the same on all the shards, and the attacker can't
mine new identities for different shards as it's possible on PoW blockchains.

Although some could consider the latest zero-knowledge implementations
fast enough, their implementations are still too experimental for
production. For the first release, we will use the SGX implementation
based on mutual attestation, which works as follows (more details
on the original paper \cite{zkpoi}):

\begin{figure}[H]
\centering{}\includegraphics[scale=0.75]{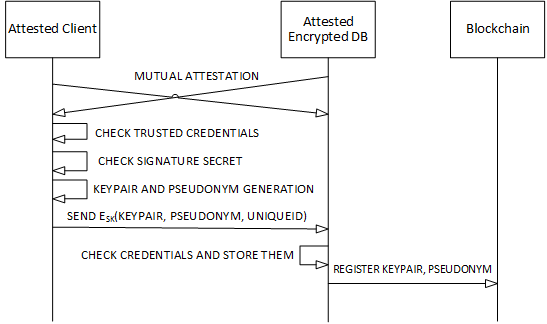}\caption{Simplified overview of mutual attestation protocol.}
\end{figure}

An approximate picture of the worldwide coverage follows:
\begin{center}
\begin{figure}[H]
\begin{centering}
\includegraphics[scale=0.3]{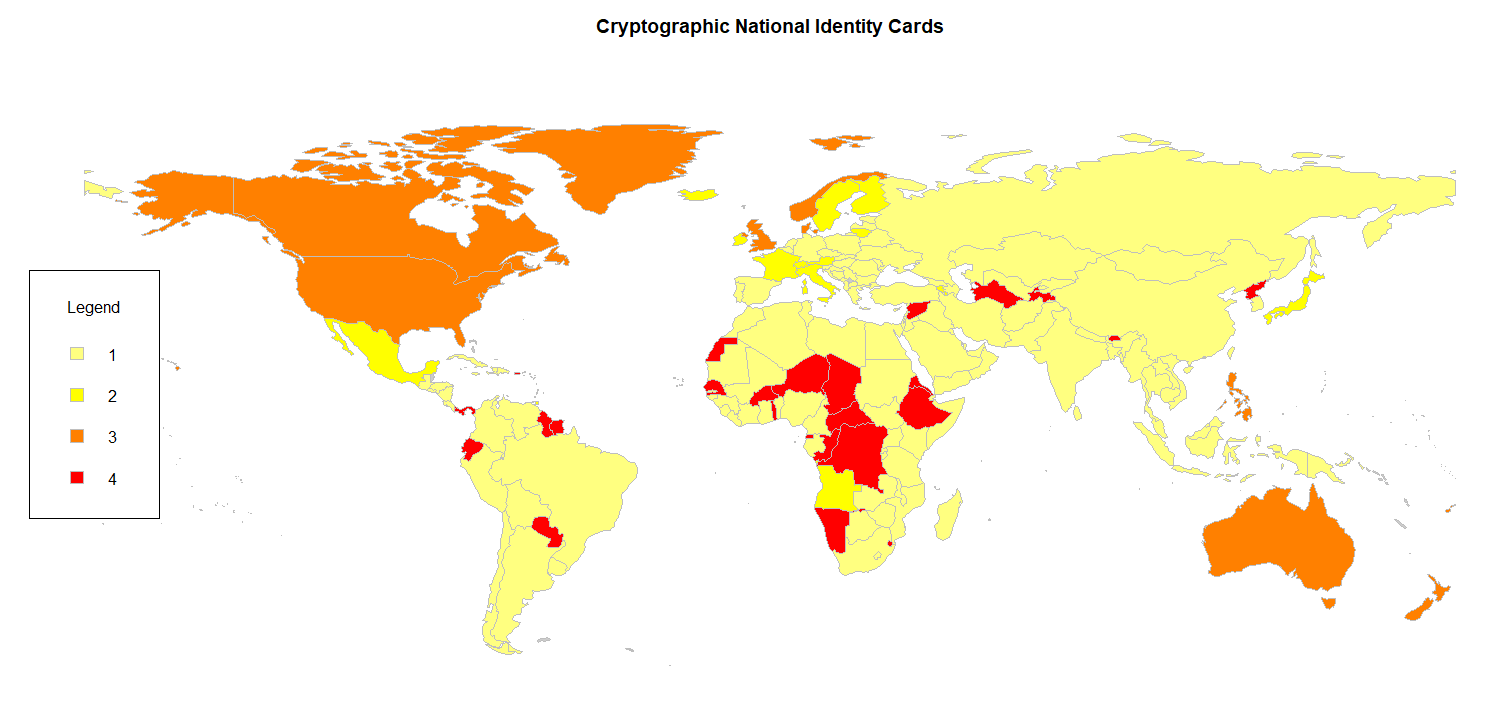}
\par\end{centering}
\caption{Legend: 1) National identity card is a mandatory smartcard; (2) National
identity card is a voluntary smartcard; (3) No national identity card,
but cryptographic identification is possible using an ePassport, driving
license and/or health card; (4) Non-digital identity card.}
\end{figure}
\par\end{center}

\section{Discussion}

In this section, we discuss the economic rationale underpinning the
unique features of this blockchain design that helps it to overcome
previous shortcomings and achieve an improved blockchain tailored
to real-world settings according to the experiences from the last
decade (e.g., Bitcoin\cite{bitcoin}).

\subsection{Overcoming Bitcoin's Limited Adoption Problem\label{subsec:Overcoming-Bitcoin-Limited}}

In a recent paper\cite{bitcoinFatalFlaw}, it is shown that a PoW
payments blockchain (i.e., Bitcoin) cannot simultaneously sustain
a large volume of transactions and a non-negligible market share:
\begin{prop}
(Adoption Problem \cite{bitcoinFatalFlaw}). Adoption decreases as
demand rises (i.e., the adoption rate of a network, $c^{*}$, decreases
in $N$). Moreover, the blockchain faces limited adoption,
\[
\underset{N\rightarrow\infty}{\lim}c^{*}=0.
\]
\end{prop}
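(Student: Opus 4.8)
The plan is to work within the equilibrium model of the cited paper, in which a large population of $N$ potential users, each with a privately known valuation $v$ for transacting, independently decides whether to adopt the chain. I would first pin down the object $c^{*}$: it is the equilibrium fraction of the pool that transacts on-chain, determined by the indifference condition of the marginal adopter. The feature that drives the result is the fixed throughput of a PoW payments blockchain: at most $K$ transactions clear per unit time, which turns adoption into a congestion game in which the per-transaction fee and expected confirmation delay both rise in the realized load. Writing the net payoff of a user with valuation $v$ as $v - p(cN) - \kappa\,w(cN)$, with $p$ the market-clearing fee, $w$ the expected wait, and $\kappa$ the delay-cost parameter, I would define $c^{*}(N)$ as the fixed point at which the marginal adopter is exactly indifferent, i.e.\ the root of an equilibrium condition $F\big(c^{*}(N),N\big)=0$.

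For the first claim I would sign the comparative static $\partial c^{*}/\partial N<0$ via the implicit function theorem applied to $F\big(c^{*}(N),N\big)=0$. Since the congestion cost depends on the total load $cN$, increasing $N$ at fixed $c$ strictly lowers the marginal adopter's surplus ($\partial F/\partial N<0$), while equilibrium stability gives the own-effect $\partial F/\partial c<0$; together these yield $\partial c^{*}/\partial N<0$. This is a routine single-crossing/monotone-comparative-statics argument once the congestion externality has been shown to have the right sign.

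For the limiting claim I would exploit the capacity bound directly. In the congested regime the fee market rations the fixed throughput, so the served mass is pinned near capacity and any equilibrium satisfies $c^{*}(N)\,N\le K$ (the system cannot clear more transactions than its throughput without fees and delays diverging and choking off the marginal user). Dividing by $N$ gives the squeeze
\[
0 \le c^{*}(N) \le \frac{K}{N},
\]
so that $\lim_{N\to\infty}c^{*}(N)=0$. Equivalently, as $N$ grows the valuation threshold for profitable adoption is pushed toward the top of the distribution, and the served fraction vanishes.

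The main obstacle will be ruling out the countervailing positive network effect. Payment systems exhibit Metcalfe-type complementarities — a user's value of transacting rises with the number of counterparties already on-chain — which pushes $c^{*}$ \emph{up} in $N$ and could overturn monotonicity, or at least break it in the uncongested regime where the chain is below capacity. The crux is therefore to show that under the fixed-throughput constraint the congestion term eventually dominates the network-benefit term, so the net externality is negative for large load. This is where I expect to need the structural hypotheses of the cited model (for instance a concave, bounded network-benefit against an unboundedly increasing congestion cost) both to guarantee uniqueness of $c^{*}$, so that the comparative statics are well defined, and to secure the correct sign in the limit.
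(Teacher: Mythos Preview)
The paper does not prove this proposition at all. It is imported from the cited reference \cite{bitcoinFatalFlaw} and simply restated in the discussion section (Section~5.1) to motivate Pravuil's design; no argument, proof sketch, or even informal justification appears in the present paper. There is consequently nothing here to compare your attempt against.

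For what it is worth, your reconstruction is a plausible reading of how the original source might proceed: the congestion-game framing with a fixed throughput $K$, the implicit-function comparative static on the marginal-adopter indifference condition, and the capacity squeeze $0\le c^{*}(N)\le K/N$ are all natural ingredients for a result of this shape, and your flagged obstacle (positive network externalities pulling in the opposite direction) is exactly the tension such a model must resolve. But whether the cited paper actually argues this way, or instead derives $c^{*}$ in closed form from a specific queueing or fee-auction primitive and reads off monotonicity directly, cannot be determined from the present paper, which treats the proposition as a black-box citation.
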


Even allowing dynamic PoW supply (i.e., by relaxing PoW's artificial
supply constraint) achieves widespread adoption only at the expense
of decentralisation:
\begin{prop}
(Decentralisation implies Limited Adoption \cite{bitcoinFatalFlaw}).
PoW blockchains necessarily face either centralisation,
\[
\underset{N\rightarrow\infty}{\lim}\sup V\leq1,
\]
or limited adoption,
\[
\underset{N\rightarrow\infty}{\lim}c^{*}=0.
\]
\end{prop}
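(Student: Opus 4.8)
The plan is to establish the dichotomy by contradiction, exploiting the free-entry equilibrium of the proof-of-work mining market together with a no-double-spend security constraint, rather than treating the two limiting statements as separate claims. I would first recall from the cited model the two structural objects in play: the adoption rate $c^*$, pinned down by the marginal user's indifference between transacting on-chain and abstaining, and the decentralisation measure $V$, which records the effective number of independent miners ($V \le 1$ meaning a single dominant miner). Since the preceding proposition already settles the fixed-supply case, the content here is confined to the dynamic-supply regime, where the per-block reward may be scaled up with demand through seigniorage.

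First I would write the security constraint in Budish form (consistent with \cite{economicLimitBitcoin}): to deter a majority attack at demand level $N$, the per-block flow payment to miners must dominate the value an attacker could extract, a quantity that grows with the transacted value and hence with $N c^*$. Next I would impose the zero-profit free-entry condition, equating aggregate mining expenditure---including each active miner's fixed operating cost---to total miner revenue, namely transaction fees plus newly minted supply. Eliminating the reward between these two relations yields a single feasibility inequality that couples the value to be secured ($\sim N c^*$), the financing available (fees plus inflation), and the number of miners implicit in $V$.

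The core of the argument is to show that this inequality cannot hold for large $N$ with both $\limsup_{N} V > 1$ and $\liminf_{N} c^* > 0$. I would argue that keeping $V$ bounded away from $1$ forces many miners each to cover a fixed cost, so that the duplication of these costs drives total required expenditure upward as the secured value $N c^*$ grows; this expenditure must then be financed either through fees, which raise the effective cost of transacting and eject the marginal adopter (so $c^* \to 0$), or through supply expansion, whose dilution erodes the currency's value and feeds back into the same indifference condition to again depress adoption. Substituting the financing identity into the adoption indifference condition and letting $N \to \infty$, I would show that any sequence holding $c^*$ bounded away from zero forces the dominant miner's share to unity, i.e. $\limsup_{N} V \le 1$, which is the asserted dichotomy.

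I expect the main obstacle to be the dilution channel. Unlike fees, whose effect on the marginal user's indifference condition is direct, supply expansion acts indirectly through the expected holding return and the equilibrium currency price, so one must show that the price response to seigniorage-financed security is strong enough that adoption still collapses---ruling out a knife-edge in which inflation finances security ``for free'' without eroding value. Pinning down this feedback from supply to valuation is where the real content lies; once it is fixed, the security and free-entry inequalities combine comparatively mechanically.
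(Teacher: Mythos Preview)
The paper you are working from does not actually prove this proposition: it is stated purely as a cited result from \cite{bitcoinFatalFlaw}, with no accompanying argument, sketch, or even indication of proof strategy. There is therefore nothing in the present paper against which your proposal can be compared.

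That said, your sketch is a plausible reconstruction of how such a dichotomy would be derived in the Hinzen--John--Saleh framework: combining the security (incentive-compatibility) constraint with the free-entry condition for miners, and then showing that financing security at scale either through fees or through seigniorage feeds back into the marginal adopter's indifference condition and drives $c^{*}\to 0$ unless mining concentrates. Whether this matches the actual proof in \cite{bitcoinFatalFlaw} cannot be determined from the present paper; you would need to consult that source directly. Your own identified weak point---pinning down the price response to seigniorage so that inflation cannot finance security ``for free''---is exactly where the substantive modelling assumptions of the original paper would enter, and without those assumptions stated precisely your argument remains a heuristic outline rather than a proof.
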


The previous propositions expose that the lack of widespread adoption
constitutes an intrinsic property of PoW payments blockchains: as
transaction demands grow, fees increase endogenously. Attracted by
this growth, more nodes join the validation process, expanding the
network size and thus protracting the consensus process and generating
increased payment confirmation times: only users insensitive to wait
times would transact in equilibrium, and limited adoption arises.
Moreover, this limitation cannot be overcome as it's rooted in physics
(i.e., network delay).

As pointed out by the previous proposition, centralised blockchains
overcome the limited adoption problem: for example, permissioned blockchains
that remain secure on an open, adversarial network such as the blockchain
proposed in this paper, enabling lower payment confirmation times
when omitting PoW's artificial supply constraint ,
\begin{prop}
(Lower Payment Confirmation Times \cite{bitcoinFatalFlaw}). For any
PoW protocol, there exists a permissioned blockchain that remains
secure on an open, adversarial network (i.e., Pravuil), which induces
(weakly) lower payment confirmation times.
\end{prop}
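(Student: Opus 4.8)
The plan is to compare the two protocols through a single functional of the network---the expected payment confirmation time as a function of transaction demand---and to exhibit Pravuil as the witnessing permissioned blockchain. First I would fix a common target security level, say an upper bound $\varepsilon$ on the probability that a confirmed payment is later reverted, so that the inequality is established at equal safety rather than by trading security for speed. Writing $T_{\mathrm{PoW}}(N)$ for the confirmation time of an arbitrary PoW protocol at network size $N$, I would then recall from the Adoption Problem proposition that $T_{\mathrm{PoW}}$ is nondecreasing in demand: higher demand raises fees, which attracts validators, which enlarges $N$, which lengthens block propagation and hence the number of confirmations needed to reach $\varepsilon$-finality. The point the earlier propositions make is that this feedback is rooted in physics (network delay) and in PoW's artificial supply constraint.

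Next I would construct the witness. By the Strong Consistency definition, Pravuil generates blocks deterministically and instantly and admits no forks, so a payment is final the instant it is written into a block; there is no probabilistic waiting period of $k$ block intervals. The structural gain is that Pravuil omits PoW's supply constraint and runs its BFT consensus over a committee of bounded size (the $n=600$ design target), so its confirmation time $T_{\mathrm{Prav}}$ is governed by a single bounded BFT round and is independent of total demand. I would invoke \thmref{adversary-cannot-predict-leader-election} and \thmref{(Safety-/-Censorship-resistance)}, together with the $n=3f+1$ assumption, to certify that this bounded committee nonetheless remains secure on an open, adversarial network, discharging the security clause of the statement.

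The comparison then follows at the fixed level $\varepsilon$: the PoW chain must wait for roughly $k(\varepsilon,N)$ confirmations with propagation across $N$ nodes, giving $T_{\mathrm{PoW}}(N)\ge k(\varepsilon,N)\,\delta(N)$ with $\delta$ increasing in $N$, whereas Pravuil attains deterministic finality in one bounded round, so $T_{\mathrm{Prav}}\le T_{\mathrm{PoW}}(N)$ for every $N$---precisely the weakly-lower conclusion, with equality permitted only in the degenerate low-demand regime.

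The main obstacle I anticipate is making the security clause uniform and honest rather than merely assumed: the entire force of the statement is that the speed-up costs nothing in security on an adversarial network, so the delicate step is showing that the bounded committee retains the same $\varepsilon$-safety against the round-adaptive adversary that the far larger PoW network enjoys. This is where \thmref{(Safety-/-Censorship-resistance)} and the $n=3f+1$ bound must carry real weight, and where I would have to argue that committee admission via zk-PoI prevents the adversary from inflating its effective share as demand grows; otherwise the bounded-committee premise could silently weaken security and void the equal-security comparison on which the inequality rests.
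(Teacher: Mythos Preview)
The paper does not actually prove this proposition. It is stated as a cited result from \cite{bitcoinFatalFlaw}, and the only justification the paper supplies is the informal sentence immediately preceding it: permissioned blockchains that remain secure on an open, adversarial network enable lower payment confirmation times ``when omitting PoW's artificial supply constraint.'' No formal argument, no comparison at a fixed security level, and no invocation of \thmref{adversary-cannot-predict-leader-election} or \thmref{(Safety-/-Censorship-resistance)} appears; the proposition is simply imported.

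Your proposal is therefore far more detailed than anything the paper itself offers. The structure you outline---fixing a common $\varepsilon$-security level, exhibiting Pravuil as the witness via Strong Consistency and the bounded-committee BFT round, and discharging the security clause through the paper's internal theorems and the $n=3f+1$ assumption---is a reasonable reconstruction of what a proof \emph{ought} to look like, and it is consistent with the surrounding prose about PoW's supply constraint and network-delay feedback. But strictly speaking there is no ``paper's own proof'' to compare against here; whether your argument matches the one in the external reference \cite{bitcoinFatalFlaw} cannot be determined from this paper alone.
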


Additionally, omitting PoW's artificial supply constraint facilitates
timely service even for high transaction volumes:
\begin{prop}
(No Limited Adoption Problem \cite{bitcoinFatalFlaw}). In any permissioned
equilibrium, widespread adoption can be obtained,
\[
\underset{N\rightarrow\infty}{\lim}c_{P}^{*}=\min\left\{ \frac{R_{P}}{\Delta\left(V_{P}\right)},1\right\} .
\]
\end{prop}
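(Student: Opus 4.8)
The plan is to specialise the adoption-equilibrium framework of \cite{bitcoinFatalFlaw} to the permissioned regime realised by Pravuil and then pass to the large-market limit. First I would recall the equilibrium characterisation underlying the three preceding propositions: a potential user adopts the payment network precisely when the per-transaction reward outweighs the disutility incurred from waiting for confirmation, so that the equilibrium adoption rate $c^{*}$ is pinned down by the indifference condition of the marginal user, which equates the reward against the confirmation delay. In the permissioned setting the relevant reward is $R_{P}$ and the relevant delay is $\Delta\left(V_{P}\right)$.

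The decisive structural difference from the PoW analysis is that the validating set in a permissioned blockchain does \emph{not} expand endogenously with transaction demand: node membership is fixed by the admission mechanism (here, the zk-PoI Sybil-resistance of \subsecref{Zero-Knowledge-Proof-of-Identity}) rather than drawn in by rising fees. Consequently the confirmation delay $\Delta$ is a function of the security/value level $V_{P}$ alone and is independent of $N$, in sharp contrast to the PoW case where $\Delta$ grows with $N$ through the fee-driven enlargement of the consensus set, the very mechanism responsible for $\lim_{N\to\infty}c^{*}=0$. I would make this independence explicit by exhibiting $\Delta$ as determined by the protocol's fixed round structure and its security parameter, not by the arrival rate $N c$.

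Solving the marginal-user indifference condition then yields the candidate adoption rate as the ratio $R_{P}/\Delta\left(V_{P}\right)$, truncated at full participation, giving
\[
c_{P}^{*}=\min\left\{ \frac{R_{P}}{\Delta\left(V_{P}\right)},1\right\}.
\]
Because the right-hand side carries no dependence on $N$, passing to the limit $N\rightarrow\infty$ is immediate and leaves the expression unchanged, which establishes the claim and in particular exhibits widespread adoption whenever $R_{P}\ge\Delta\left(V_{P}\right)$.

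The step I expect to be the main obstacle is the second one: rigorously justifying the $N$-independence of $\Delta\left(V_{P}\right)$ within the model of \cite{bitcoinFatalFlaw}, and confirming that the permissioned equilibrium remains well-defined (existence and uniqueness of the fixed point) uniformly as $N$ grows, so that the limit may legitimately be interchanged with the equilibrium selection. Handling the boundary case where the truncation at $1$ binds, in which the marginal-user condition holds only with slack rather than equality, will also require some care.
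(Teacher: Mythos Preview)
There is nothing in the paper to compare your proposal against: the proposition is not proved here but is quoted verbatim from \cite{bitcoinFatalFlaw}, like the three propositions preceding it in \subsecref{Overcoming-Bitcoin-Limited}. The present paper's contribution is only to observe that Pravuil, by virtue of its permissioned admission mechanism (\subsecref{Zero-Knowledge-Proof-of-Identity}), falls within the scope of that external result; it does not reproduce or sketch the argument.

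Your outline is a plausible reconstruction of how the argument in \cite{bitcoinFatalFlaw} might run --- isolating the marginal-user indifference condition, noting that in the permissioned regime the confirmation delay $\Delta(V_P)$ does not scale with $N$ because the validator set is not drawn in by fee growth, and then taking the trivial limit --- but whether this matches the actual proof in \cite{bitcoinFatalFlaw}, and whether the specific form $c_P^{*}=\min\{R_P/\Delta(V_P),1\}$ arises exactly from the indifference condition you describe or from a more detailed queueing/equilibrium computation, cannot be adjudicated from this paper alone. If you want to verify your sketch, you will need to consult the cited source directly.
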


\subsection{Obtaining Higher Transaction Security At A Lower Cost\label{subsec:Obtaining-Higher-Transaction}}

In another recent paper\cite{tradeoffsBlockchains}, it is shown that
permissioned blockchains have a higher level of transaction safety
than a permissionless blockchain, independent of the block reward
and the current exchange rate of the crypto-currency.

For a PoW permissionless blockchain, let $R$ be the block reward
in the corresponding crypto-currency, $x$ the associated exchange
rate to fiat currency, $w$ the block maturation rate (e.g., for Bitcoin,
$R=6,25;x=\$60.000;w=100$), $f$ be the probability of detecting
that blocks have been replaced, and $\beta_{pl}$ be the value above
which transactions are not safe,
\[
\beta_{pl}=fwRx.
\]

Note that 51\% attacks are becoming more common, specially for purely
financial reasons \cite{51attacks}. For a permissioned blockchain,
let $P_{i}$ be the punishment applied to each node $i$ if it participates
in an attack,$\tau\in\left[0,1\right]$ be the probability that nodes
that participated in an attack will be punished, and $\beta_{P}$
be the value above which transactions are not safe,
\[
\beta_{P}=f\tau\sum_{i\in B}P_{i},
\]
with $B$ being the set of $N$ nodes with the lowest $P_{i}$. Typical
punishments include confiscating all the funds deposited on the blockchain
and banning them from the blockchain, among others.
\begin{prop}
(\cite{tradeoffsBlockchains}). A permissioned blockchain that is
safe in an open, adversarial environment (i.e., Pravuil) has a higher
level of maximum value for transaction safety than a PoW permissionless
blockchain if
\[
\tau\sum_{i\in B}P_{i}>wRx.
\]
\end{prop}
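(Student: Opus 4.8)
The plan is to reduce the claimed comparison to a direct algebraic comparison of the two safety thresholds introduced immediately above the statement. First I would make precise what ``higher level of maximum value for transaction safety'' means: each of $\beta_{pl}$ and $\beta_{P}$ is the value above which transactions cease to be safe, so a larger threshold corresponds to a strictly wider range of transaction values that remain safe. Hence the permissioned chain (Pravuil) offers strictly higher transaction safety exactly when
\[
\beta_{P} > \beta_{pl}.
\]

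Next I would substitute the two defining expressions $\beta_{pl} = fwRx$ and $\beta_{P} = f\tau\sum_{i\in B}P_{i}$ into this inequality, obtaining
\[
f\tau\sum_{i\in B}P_{i} > fwRx.
\]
The key structural observation is that the detection probability $f$ --- the probability that a block replacement is noticed --- enters both thresholds as a common multiplicative factor, and under the modelling assumptions it takes the same value in the permissionless and permissioned settings. Since $f$ is a nondegenerate detection probability we have $f > 0$, so dividing both sides by $f$ preserves the strict inequality and yields precisely the hypothesis $\tau\sum_{i\in B}P_{i} > wRx$.

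Finally I would note that the chain of reasoning is reversible: each step is an equivalence, so the hypothesis is in fact both necessary and sufficient for $\beta_{P} > \beta_{pl}$, which is more than the stated one-directional implication requires. The only point demanding genuine care --- and the closest thing to an obstacle in an otherwise routine argument --- is justifying that the factor $f$ is genuinely shared across the two models and is strictly positive; once that modelling assumption is granted, the cancellation is immediate and the rest is mere bookkeeping of the two definitions.
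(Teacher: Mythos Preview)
Your argument is correct and is exactly the natural derivation from the two definitions $\beta_{pl}=fwRx$ and $\beta_{P}=f\tau\sum_{i\in B}P_{i}$ given just before the proposition. Note, however, that the paper itself does not supply a proof of this proposition at all: it is simply quoted from the cited reference, so there is no ``paper's own proof'' to compare against. Your reconstruction---interpreting ``higher level of maximum value for transaction safety'' as $\beta_{P}>\beta_{pl}$ and then cancelling the common positive factor $f$---is the intended and essentially unique route.
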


Even with small values of $\tau$ will result in higher safety for
larger transactions than PoW permissionless blockchains:
\begin{prop}
(\cite{tradeoffsBlockchains}). For $\tau>0$ and high enough $P_{i}$'s,
a permissioned blockchain that is safe in an open, adversarial environment
(i.e., Pravuil) is more resilient than PoW permissionless blockchains
whenever
\[
\sum_{i\in B}P_{i}>\frac{wRx}{\tau}.
\]
\end{prop}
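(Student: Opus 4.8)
The plan is to reduce the claim to a direct comparison of the two safety thresholds $\beta_{pl}$ and $\beta_{P}$ introduced just above. Recall that $\beta_{pl}=fwRx$ is the largest transaction value that remains safe on the PoW permissionless chain, while $\beta_{P}=f\tau\sum_{i\in B}P_{i}$ is the analogous threshold for the permissioned chain. Interpreting ``more resilient'' as admitting a strictly larger maximal safe value, the assertion to establish is simply $\beta_{P}>\beta_{pl}$, and the whole argument will consist in rewriting this single inequality.

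First I would write out the inequality explicitly,
\[
f\tau\sum_{i\in B}P_{i}>fwRx,
\]
where the same detection probability $f$ appears on both sides because it is a common parameter of the two models. Since $f$ is the probability of a genuine detection event it is strictly positive, so dividing through by $f$ preserves the inequality and yields $\tau\sum_{i\in B}P_{i}>wRx$. This already reproduces the hypothesis of the preceding proposition, confirming that the two statements share the same underlying threshold comparison.

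Next, using the standing assumption $\tau>0$, I would divide both sides by $\tau$ to reach the stated form
\[
\sum_{i\in B}P_{i}>\frac{wRx}{\tau}.
\]
The two inequalities are therefore equivalent, so the displayed condition is both necessary and sufficient for the permissioned chain to dominate. For the ``high enough $P_{i}$'s'' clause I would argue the existence direction separately: for each fixed $\tau>0$ the right-hand side $wRx/\tau$ is a finite constant, so any schedule of punishments whose $N$ smallest entries sum above this constant satisfies the bound, and confiscation-style penalties make such magnitudes attainable even when $\tau$ is small.

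The genuinely delicate point is not the algebra but the cancellation of $f$: it is legitimate only if the detection probability is modelled identically across both blockchains and is strictly positive. I would therefore make this shared-$f$ assumption explicit, since it is precisely what lets the comparison collapse to a clean statement about $\tau$ and the total punishment budget $\sum_{i\in B}P_{i}$; everything else is routine rearrangement justified by the positivity of $f$ and $\tau$.
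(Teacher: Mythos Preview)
Your argument is correct: the whole content of the proposition is the equivalence $\beta_{P}>\beta_{pl}\iff \sum_{i\in B}P_{i}>wRx/\tau$, which you obtain by dividing $f\tau\sum_{i\in B}P_{i}>fwRx$ first by the common positive factor $f$ and then by $\tau>0$. Your remark that the ``high enough $P_{i}$'s'' clause is an existence statement (the right-hand side $wRx/\tau$ being a fixed finite constant once $\tau>0$ is given) is also exactly the intended reading.

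As for comparison with the paper: there is nothing to compare against. The paper does not supply its own proof of this proposition; it is simply quoted from \cite{tradeoffsBlockchains} as a stated result, like the other propositions in that subsection. Your derivation is the natural one-line justification and is consistent with the surrounding text, which already introduces $\beta_{pl}=fwRx$ and $\beta_{P}=f\tau\sum_{i\in B}P_{i}$ and states the preceding proposition in the undivided form $\tau\sum_{i\in B}P_{i}>wRx$. Your caveat that the cancellation of $f$ presupposes a common, strictly positive detection probability across both models is a fair point to flag, and it matches how the paper sets things up (the same symbol $f$ is used in both threshold formulas without distinction).
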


Ultimately, the cost of providing incentives to the validating nodes
not to participate in potential attacks (i.e., validating incentives
such as block rewards) will be lower for permissioned blockchains.
\begin{prop}
(\cite{tradeoffsBlockchains}). Suppose that $\beta_{pl}>0$ and $\beta_{p}>0$,
then at equilibrium the validator incentives in the permissioned blockchain
that is safe in an open, adversarial environment (i.e., Pravuil) are
lower than for the PoW permissionless.
\end{prop}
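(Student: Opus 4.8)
The plan is to read ``validator incentives'' as the equilibrium flow of block rewards a design must pay so that honest validation is incentive-compatible --- that is, so that no validator prefers to participate in an attack --- and then to show that the permissioned design can sustain its no-attack equilibrium with a strictly smaller reward flow than the PoW design, because it possesses a second, costless-on-path deterrence instrument that PoW lacks. The hypotheses $\beta_{pl}>0$ and $\beta_P>0$ serve only to guarantee that both deterrents are operative and that the comparison is non-degenerate.

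First I would isolate the deterrence instruments available to each design. In a permissionless PoW chain the validators are pseudonymous and post no confiscatable capital, so the continuation value of honest mining --- financed entirely by block rewards --- is the only thing an attacker forfeits; the no-deviation constraint therefore requires this reward-financed continuation value to dominate the double-spend gain, which is bounded by $\beta_{pl}=fwRx$. I would record that here deterrence and validator incentive are the \emph{same} object: the whole of $\beta_{pl}$ must be purchased with rewards that are actually paid out every block.

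Next I would analyse the permissioned equilibrium. Because validators are identified and have posted deposits, an attacker additionally forfeits $\sum_{i\in B}P_i$ with probability $f\tau$, yielding the deterrent $\beta_P=f\tau\sum_{i\in B}P_i$. The decisive point is that on the equilibrium path no attack is launched, so these punishments are never executed: they supply deterrence at zero expected cost. Consequently the block reward no longer has to carry the deterrence burden and need only satisfy the participation constraint, covering operating costs and the (second-order) opportunity cost of the locked deposit. Writing each design's no-attack condition as ``gain from attack $\le$ reward-financed deterrent $+$ punishment-financed deterrent'', the PoW condition has a vanishing second term whereas the permissioned one does not, so the reward-financed term --- the validator incentive --- can be set strictly lower in the permissioned case while still deterring attacks.

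I expect the main obstacle to be ruling out a hidden cost in the punishment channel, and thereby confirming that substituting confiscation for reward is a genuine reduction rather than a relabelling of the same cost. Concretely, one must argue that the opportunity cost of locking the deposits $P_i$ is small relative to the per-block reward stream --- for instance because deposits are returned, and may earn interest, as long as no attack is detected --- so that the punishment-financed deterrent is essentially free on the equilibrium path. Once this is in place, the strict positivity $\beta_P>0$ certifies that a genuinely non-trivial part of the deterrence is financed by punishment rather than by reward, and the claimed inequality of validator incentives follows.
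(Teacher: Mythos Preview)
The paper does not supply its own proof of this proposition: it is quoted from the cited source and is followed only by the informal remarks that ``the mere existence of credible penalties $P_{i}$ with positive probability $\tau$ is enough for the system to remain secure, without needing to exert punishments in the case of rational attackers'' and that $\tau$ and $P_{i}$ ``are not economic parameters of the system, unlike the parameters for PoW permissionless blockchains.'' There is therefore no in-paper argument against which to check your proposal line by line.

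That said, your reconstruction is fully consistent with the model the paper sets up and with the intuition its surrounding text gestures at. Your key move---decomposing the no-attack deterrent into a reward-financed component (the only instrument available to anonymous PoW miners, so that $\beta_{pl}=fwRx$ must be bought entirely through $R$) and a punishment-financed component $f\tau\sum_{i\in B}P_{i}$ that is never triggered on the equilibrium path and hence costs nothing in equilibrium---is exactly the mechanism behind the paper's remark that the permissioned parameters are ``not economic.'' Within the stated model this already suffices for the conclusion, since $\beta_{P}$ does not involve the block reward at all: the permissioned reward is pinned down only by the participation constraint, not by deterrence, whereas the PoW reward must finance the entire deterrent. The one place where you go beyond what the model strictly licenses is your discussion of the opportunity cost of the locked deposits $P_{i}$; the paper's formulas simply do not contain such a term, so inside this model the concern is moot, though you are right that a richer model would need to address it.
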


According to the model of this paper, in order to increase the transaction
safety, we only need to need increase:
\begin{itemize}
\item $\tau$, a probability that reflects user's trust in the system
\item $P_{i}$, a penalty that could also include legal action
\end{itemize}
In general, the mere existence of credible penalties $P_{i}$ with
positive probability $\tau$ is enough for the system to remain secure,
without needing to exert punishments in the case of rational attackers.
Additionally, note that these parameters are not economic parameters
of the system, unlike the parameters for PoW permissionless blockchains.

\subsection{An Empirical Approach to Blockchain Design}

Motivated by the abstract analysis from the previous sub-section \ref{subsec:Obtaining-Higher-Transaction},
we use the numerical comparisons between crypto-currencies from the
paper \cite{blockchainComparison} to compare permissionless and permissioned
blockchains in practice:

\begin{table}[H]
\centering{}%
\begin{tabular}{|c|c|c|c|c|c|c|c|}
\hline 
 & BTC & ETH & BCH & LTC & ADA & USDT & Mean\tabularnewline
\hline 
\hline 
Popularity & 1 & 2 & 4 & 7 & 8 & 9 & \tabularnewline
\hline 
Cost & 1.33 & 2 & 1.66 & 2.66 & 4.33 & 5 & 2.83\textbf{({*})}\tabularnewline
\hline 
Consistency & 1.33 & 2.33 & 1.33 & 2 & 3.66 & 1 & \tabularnewline
\hline 
Functionality & 2 & 5 & 2 & 2 & 4.33 & 2 & \tabularnewline
\hline 
Performance & 1.33 & 1.66 & 2 & 2.33 & 3 & 1 & 1.88\textbf{({*})}\tabularnewline
\hline 
Security & 4 & 4 & 4 & 4 & 4 & 3.33 & 3.88\tabularnewline
\hline 
Decentralisation & 5 & 3.33 & 4.33 & 3.66 & 3.33 & 1.33 & \tabularnewline
\hline 
Total & 14.99 & 18.32 & 15.32 & 16.7 & 22.65 & 13.66 & \tabularnewline
\hline 
 &  &  &  &  &  &  & \tabularnewline
\hline 
\textbf{\footnotesize{}Performance/Cost} & 0.28 & 0.41 & 0.46 & 0.7 & 1.79 & 1 & 0.77\textbf{({*})}\tabularnewline
\hline 
\textbf{(Perf{*}Sec)/Cost} & 1.13 & 1.66 & 1.84 & 2.79 & 7.18 & 3.33 & 2.99\textbf{({*})}\tabularnewline
\hline 
\textbf{Security/Cost} & 0.85 & 1 & 0.92 & 1.2 & 2.39 & 3.33 & 1.61\textbf{({*})}\tabularnewline
\hline 
\end{tabular}\caption{Permissionless blockchains. ({*}): p < 0.05 }
\end{table}

\begin{table}[H]
\begin{centering}
\begin{tabular}{|c|c|c|c|c|c|c|}
\hline 
 & XRP & EOS & XLM & TRX & MIOTA & Mean\tabularnewline
\hline 
\hline 
Popularity & 3 & 5 & 6 & 11 & 10 & \tabularnewline
\hline 
Cost & 4.66 & 5 & 4.66 & 5 & 5 & 4.84\textbf{({*})}\tabularnewline
\hline 
Consistency & 4.33 & 5 & 4 & 4 & 4.66 & \tabularnewline
\hline 
Functionality & 1.33 & 5 & 1.33 & 5 & 3.66 & \tabularnewline
\hline 
Performance & 4.33 & 4.66 & 4 & 4.66 & 5 & 4.53\textbf{({*})}\tabularnewline
\hline 
Security & 2.33 & 3.33 & 4 & 3.33 & 3.66 & 3.33\tabularnewline
\hline 
Decentralisation & 1 & 2.66 & 2.33 & 3.33 & 2.33 & \tabularnewline
\hline 
Total & 17.98 & 25.65 & 20.32 & 25.32 & 24.31 & \tabularnewline
\hline 
 &  &  &  &  &  & \tabularnewline
\hline 
\textbf{Performance/Cost} & 3.23 & 4.66 & 2.98 & 4.66 & 5 & 4.10\textbf{({*})}\tabularnewline
\hline 
\textbf{(Perf{*}Sec)/Cost} & 7.52 & 15.51 & 11.94 & 15.51 & 18.3 & 13.76\textbf{({*})}\tabularnewline
\hline 
\textbf{Security/Cost} & 1.73 & 3.33 & 2.98 & 3.33 & 3.66 & 3\textbf{({*})}\tabularnewline
\hline 
\end{tabular}
\par\end{centering}
\caption{Permissioned blockchains. \textbf{({*})}: p < 0.05 }
\end{table}
Using two-samples t-tests assuming unequal variances, we compare the
following means between permissionless and permissioned blockchains,
remarking that they are statistically significant:
\begin{itemize}
\item \textbf{Cost}: permissionless blockchains are costlier (2.83) than
permissioned blockchains (4.84). Please note that a higher cost score
means that the blockchain is considered to have better costs (i.e.,
lower costs), and the ranking obtained from this cost score must be
reversed to be useful in the next rankings.
\item \textbf{Performance}: permissionless blockchains are less performant
(1.88) than permissioned blockchains (4.53).
\item \textbf{Performance/Cost}: permissionless blockchains show worse performance
regarding cost (0.77) than permissioned blockchains (4.10).
\item \textbf{(Performance{*}Security)/Cost}: permissionless blockchains
show worse performance and security regarding cost (2.99) than permissioned
blockchains (13.76).
\item \textbf{Security/Cost}: permissionless blockchains show worse security
regarding cost (1.61) than permissioned blockchains (3).
\end{itemize}
It's clear from the empirical data that permissionless blockchains
are considered worse than permissioned blockchains when considering
cost, performance and security.

\subsection{Achieving More Decentralisation Than Other Permissionless Blockchains\label{subsec:Achieving-More-Decentralisation}}

In yet another recent publication \cite{permissionedDecentralization},
it is noticed that permissioned blockchains could achieve more decentralisation
than permissionless blockchains: real-world permissionless blockchains
are quite centralised \cite{gencer2018decentralization}, as there
aren't formal checks for the underlying centralisation. 

In order to obtain a more decentralised permissioned blockchain that
is safe in an open, adversarial network (i.e., Pravuil), the node
admission/gatekeeping function must be decentralised and opened: precisely,
this ideal state is achieved with our zero-knowledge Proof-of-Identity
\cite{zkpoi}, as previously explained in sub-section \ref{subsec:Zero-Knowledge-Proof-of-Identity}.

\begin{figure}[H]
\centering{}\includegraphics[scale=0.5]{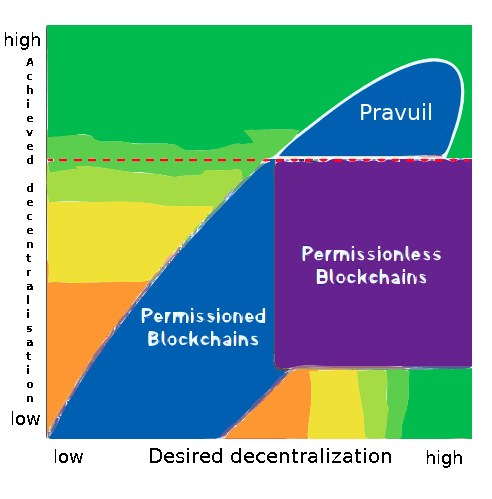}\caption{Comparing decentralisation (from \cite{permissionedDecentralization}).}
\end{figure}

\subsection{Overcoming The Scalability Trilemma}

The scalability trilemma postulates that a blockchain system can only
at most have two of the following three properties: decentralisation,
scalability, and security.
\begin{center}
\begin{figure}[H]
\begin{centering}
\includegraphics[scale=0.45]{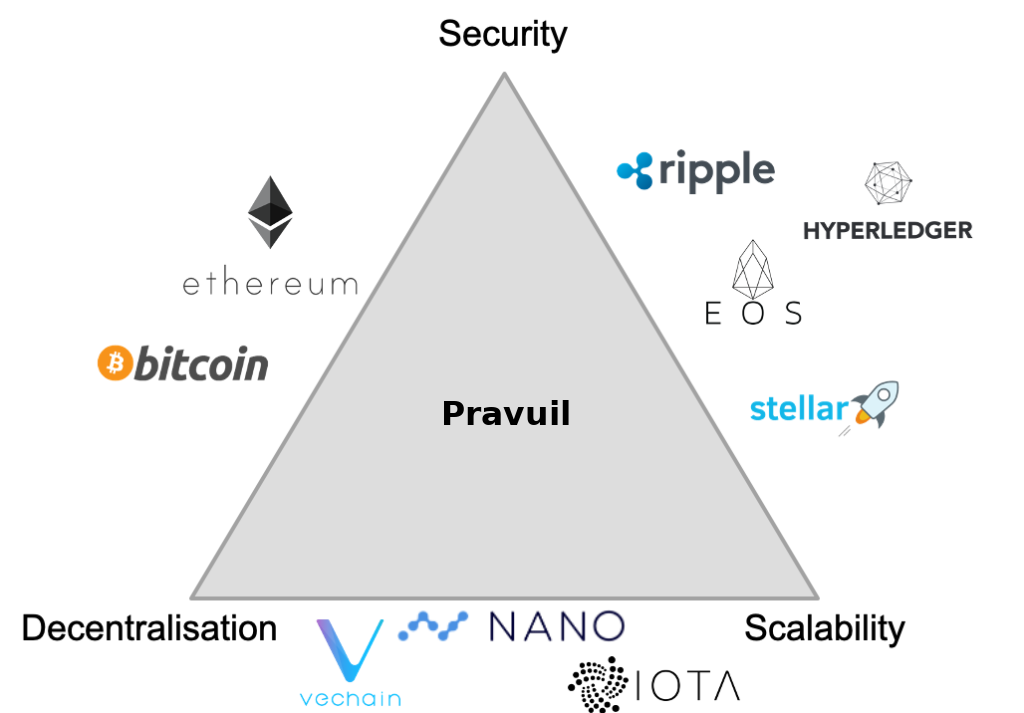}
\par\end{centering}
\caption{Pravuil overcomes the Scalability Trilemma.}
\end{figure}
\par\end{center}

In Pravuil, decentralisation, scalability, and security can be achieved
simultaneously:
\begin{itemize}
\item \textbf{Decentralisation}: as discussed in the previous sub-section
\ref{subsec:Achieving-More-Decentralisation}, Pravuil can be more
decentralised than other permissionless blockchains by using zero-knowledge
Proof-of-Identity, as previously explained in sub-section \ref{subsec:Zero-Knowledge-Proof-of-Identity}.
It also circumvents the impossibility of full decentralisation \cite{zkpoi}.
\item \textbf{Scalability}: Pravuil inherits the scalable \textit{Rotating-Subleader}
(RS) communication pattern from MOTOR \cite{motor}, specifically
created to avoid the communication bottleneck experienced by classic
BFT protocol when run over limited bandwidth networks.
\item \textbf{Security}: Pravuil is secure as previously proved in \thmref{adversary-cannot-predict-leader-election}
and \thmref{(Safety-/-Censorship-resistance)}.
\end{itemize}

\subsection{Obviating the Price Of Crypto-Anarchy of PoW/PoS Crypto-currencies}

In a previous paper \cite{zkpoi}, it was pointed out that the most
cost-efficient Sybil-resistant mechanism is the one provided by a
trusted national PKI infrastructure \cite{douceur2002the} and a centralised
social planner would prefer the use of National Identity Cards and/or
ePassports in order to minimise costs: instead, permissionless blockchains
are paying very high costs by using PoW/PoS as Sybil-resistant mechanisms.
The Price of Crypto-Anarchy compares the ratio between the worst Nash
equilibrium of the congestion game defined by PoW blockchains and
the optimal centralised solution, quantifying the costs of the selfish
behaviour of miners.
\begin{defn*}
\textbf{(\#26 from \cite{zkpoi}). }Let $NashCongestedEquil\subseteq S$
be the set of strategies given as the solution of the optimisation
problem of Theorem 25 from \cite{zkpoi}, then the \textit{Price of
Crypto-Anarchy} is given by the following ratio:
\[
\text{Price of Crypto-Anarchy}=\frac{\max_{s\in NashCongestedEquil}Cost\left(s\right)}{Cost\left(\text{zk-PoI}\right)}
\]
In practice, the real-world costs of Zero-Knowledge Proof-of-Identity
are almost zero as the identity infrastructure is subsidised by governments.
However, the situation for PoW/PoS blockchains is quite the opposite:
\end{defn*}
\begin{itemize}
\item PoW blockchains: in 2018, Bitcoin, Ethereum, Litecoin and Monero consumed
an average of 17, 7, 7 and 14 MJ to generate one US\$ \cite{bitcoinConsumption},
and in 2021 Bitcoin may be consuming as much energy as all data centers
globally \cite{bitcoinDataCenters,bitcoinBoom} at 100-130 TWh per
year. Holders of crypto-currency ultimately experience the Price of
Crypto-Anarchy as inflation from mining rewards, see next Table \ref{tab:Cost-of-decentralisation.}:
\end{itemize}
\begin{table}[H]
\begin{centering}
\begin{tabular}{c>{\centering}p{1cm}c>{\centering}p{1cm}c>{\centering}p{1.8cm}>{\centering}p{1.5cm}}
\toprule 
Name & Reward per block & Block time & Blocks per day & Price & \textbf{Yearly mining reward} & \textbf{Yearly Inflation}\tabularnewline
\midrule
\midrule 
BTC & 6,25 & 10 m & 144 & \$50000 & \textbf{\$18,061 B} & \textbf{4,12\%}\tabularnewline
\midrule 
ETH & 2 & 13,2 s & 6545 & \$3780 & \textbf{\$16,425 B} & \textbf{1,76\%}\tabularnewline
\midrule 
DOGE & 10,000 & 1 m & 1440 & \$0,49 & \textbf{\$2,575 B} & \textbf{4,06\%}\tabularnewline
\midrule 
LTC & 12,5 & 2,5 m & 576 & \$320 & \textbf{\$840 MM} & \textbf{3,94\%}\tabularnewline
\midrule 
BCH & 6,25 & 10 m & 144 & \$1275 & \textbf{\$418 MM} & \textbf{1,75\%}\tabularnewline
\midrule 
ZEC & 3,125 & 75 s & 1152 & \$301 & \textbf{\$395 MM} & \textbf{11,84\%}\tabularnewline
\midrule 
XMR & 1,02 & 2 m & 720 & \$407 & \textbf{\$109 MM} & \textbf{1,5\%}\tabularnewline
\bottomrule
\end{tabular}
\par\end{centering}
\caption{\label{tab:Cost-of-decentralisation.}Empirical Price of Crypto-Anarchy.}

\end{table}

\begin{itemize}
\item PoS blockchains: in theory, the costs are identical to the cost of
PoW schemes, except that instead of electrical resources and mining
chips, it takes the form of illiquid financial resources\cite{moreEconomicLimitsBlockchain}
and in practice, Proof-of-Stake is not strictly better than Proof-of-Work
as the distribution of the market shares between both technologies
has been shown to be indistinguishable (Appendix 3, \cite{dynamicsCrypto}).
\end{itemize}
Bitcoin miners have earned a total of \$26.75B as of April 2021: it's
not necessary to pay so much for Sybil resistance, instead, miners
could be paid for other tasks (e.g., transaction fees). As previously
discussed, obtaining Sybil-resistance for free is not only the key
to overcome Bitcoin's limited adoption problem (\subsecref{Overcoming-Bitcoin-Limited})
and to achieve more decentralisation than other permissionless blockchains
(\subsecref{Achieving-More-Decentralisation}), but also to go beyond
the economic limits of Bitcoin as discussed in the next \subsecref{Beyond-the-Economic}.

\subsection{Beyond the Economic Limits of Bitcoin\label{subsec:Beyond-the-Economic}}

In a paper about the economic limits of Bitcoin \cite{economicLimitBitcoin},
it is pointed out that Bitcoin is prohibitively expensive to run because
the recurring, ``flow'', payments to miners for running the blockchain
(particularly, the cost of PoW mining) must be large relative to the
one-off, ``stock'', benefits of attacking it. Let $V_{attack}$
be the expected payoff to the attacker, $P_{block}$ be the block
reward to the miner and $\alpha$ representing the duration of the
attack net of block rewards, then
\[
P_{block}>\frac{v_{attack}}{\alpha},
\]
placing serious economic constraints to the practicality and scalability
of the Bitcoin blockchain, a problem that seems intrinsic to any anonymous,
decentralised blockchain protocol. Consequently, the author poses
the open question of finding another approach to generating anonymous,
decentralised trust in a public ledger that is less economically expensive:
indeed, the technical solution hereby presented\ref{subsec:Zero-Knowledge-Proof-of-Identity}
that incorporates zero-knowledge Proof-of-Identity\cite{zkpoi} is
the technology that is both ``scarce and non-repurposable'', affordable
and not susceptible to sabotage attacks that could cause a collapse
in the economic value of the blockchain that the author of \cite{economicLimitBitcoin}
would seem meritorious to close said open question.

A more recent paper \cite{moreEconomicLimitsBlockchain} continues
the previous economic analysis \cite{economicLimitBitcoin}, extending
it to PoS and permissioned settings. For the permissionless PoS setting,
it finds that the costs are identical to the cost of PoW schemes,
except that instead of electrical resources and mining chips, it takes
the form of illiquid financial resources; however, zk-PoI\cite{zkpoi}
is free. For the permissioned case concerning this paper, if the block
reward is set exogenously, it finds that a permissioned blockchain
would have lower costs than permissionless PoW or PoS blockchains
in the economic model of \cite{economicLimitBitcoin}. 

\subsection{More Valuable and Stable Crypto-currencies}

A review of previous literature in economic research reveals the following
interesting facts regarding the intricate relationship between PoW
mining (i.e., hashrate, electricity and/or equipment costs) and crypto-currency
prices:
\begin{itemize}
\item There is a positive relationship between mining hashrate and price
\cite{determinantBitcoinPrices,cryptoValueFormation}: the causality
is primarily unidirectional going from the price to the hashrate \cite{hashratePrice},
although mining incidents and political shocks that affect mining
also negatively impact prices.
\item Bitcoin's security is sensitive (elastic) to mining rewards and costs,
although temporary mining cost and price shocks do not affect the
long-run blockchain security \cite{ciaian2021interdependencies}:
a 1\% permanent increase in the mining reward increases the underlying
blockchain security by 1.38\% to 1.85\% in the long-run; positive
shocks to electricity prices in China have a negative impact on the
hashrate in the short-run; a 1\% increase in the efficiency of mining
equipment increases the computing capacity between 0.23\% and 0.83\%
in the long-run; in the short-run mining competition intensity has
a statistically positive impact leading to expansion of mining capacity,
but in the long-run, the relationship is reversed.
\item High fixed mining rewards are the source of the instability to reach
an equilibrium between miners and users \cite{highfixedMiningRewards};
instead, mining rewards should be adjusted dynamically.
\item The production of crypto-currency by miners is jointly determined
with the price used by consumers \cite{equilibriumValuationBitcoinDN}:
the equilibrium price depends on both consumer preferences (i.e.,
price increases with the average value of censorship aversion, and
current and future size of the network) and the industrial organisation
of the mining market (i.e., price increases with the number of miners
and decreases with the marginal cost of mining). Price-security spirals
amplify demand and supply shocks: for example, a sudden demand shock
provoked by a government banning the crypto-currency in a country
would lead to price drops, itself leading to miners decreasing hashrate,
further decreasing prices and the feedback loop continuing until a
new equilibrium is reached in multiple rounds. In other words, Bitcoin's
security model embeds price volatility amplification.
\item In a PoW blockchain, it's impossible to simultaneously achieve all
the three following goals \cite{decentralizingMoney}: maximise crypto-currency
price, blockchain's security, and social welfare.
\end{itemize}
Similar results can be found for PoS blockchains because they are
substituting electricity and mining costs for illiquid and volatile
financial resources \cite{moreEconomicLimitsBlockchain}. In general,
the interdependencies can be described graphically as the following
cycles and spirals:
\begin{center}
\begin{figure}[H]
\begin{centering}
\includegraphics[scale=0.5]{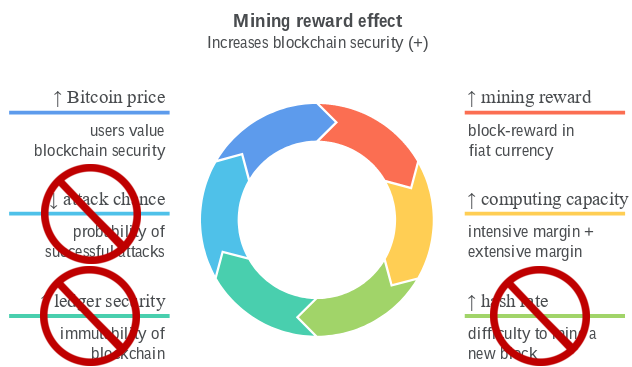}
\par\end{centering}
\caption{Interdependencies \cite{ciaian2021interdependencies}, with broken
negative feedback loops.}
\end{figure}
\par\end{center}

\begin{center}
\begin{figure}[H]
\begin{centering}
\includegraphics[scale=0.4]{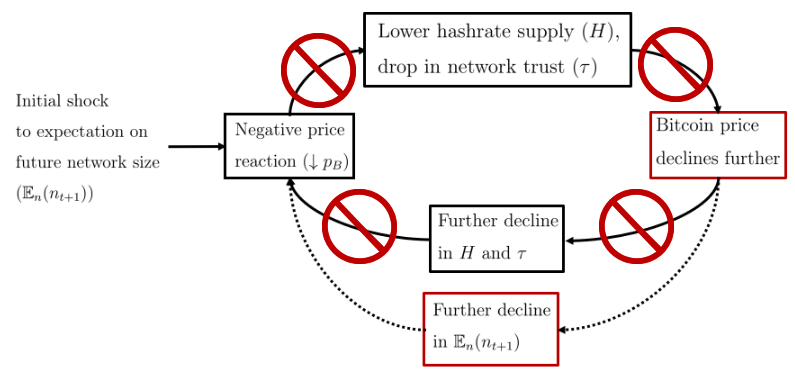}
\par\end{centering}
\caption{Spirals \cite{equilibriumValuationBitcoinDN}, with broken negative
feedback loops.}
\end{figure}
\par\end{center}

\begin{flushleft}
However, we break most of the previous interdependencies and spirals
with our strongly-consistent blockchain with free Sybil-resistance:
\par\end{flushleft}
\begin{itemize}
\item blockchain and transaction security are independent of blockchain
mining capacity, mining costs and rewards, and price: once a transaction
is instantly committed, it's committed forever.
\item there aren't price-security spirals for demand and supply shocks:
changes in prices do not lead to changes in security.
\item as blockchain's security is independent of price, it's possible to
maximise crypto-currency price and social welfare.
\end{itemize}
Ultimately, our blockchain design leads to more valuable and stable
crypto-currencies.

\section{Implementation}

Pravuil has a Testnet deployed with a working implementation consisting
of:
\begin{itemize}
\item a blockchain layer in Go and Java, invoking drand \cite{drand} as
described in this paper \ref{subsec:Rotating-Leader}.
\item zero-knowledge Proof-of-Identity \cite{zkpoi} in Python and C.
\item mobile apps for Android (Typescript, Java) and iOS (Typescript, Objective-C,
Swift).
\item secure smart contracts in Obliv-Java \cite{raziel}.
\end{itemize}
All the code will be open-sourced at \href{https://github.com/Calctopia-OpenSource}{https://github.com/Calctopia-OpenSource},
including future developments.

\section{Conclusion}

In this work, we presented Pravuil, an improvement over previous blockchains
that is suitable for real-world deployment in adversarial networks
such as the Internet. Pravuil achieves this feat by:
\begin{itemize}
\item unpredictably rotating leaders using drand \cite{drand} to defend
against adversaries and censorship attacks: drand is an Internet service
that generates publicly-verifiable, unbiasable, unpredictable, highly-available,
distributed randomness at fixed time intervals.
\item using for the first time zero-knowledge Proof-of-Identity \cite{zkpoi}
as a Sybil-resistance mechanism to overcome Bitcoin's limited adoption
problem\cite{bitcoinFatalFlaw} and to go beyond the economic limits
of Bitcoin\cite{economicLimitBitcoin}, delivering more decentralisation
than other permissionless blockchains \cite{permissionedDecentralization}.
\item based on the design of a blockchain layer that scales-out with strong
consistency prioritising robustness over scalability.
\end{itemize}
\bibliographystyle{alpha}
\bibliography{bib}

\end{document}